\documentclass[a4paper,UKenglish,cleveref, autoref, thm-restate]{lipics-v2021}

\usepackage{amsmath}
\usepackage{mathtools}
\usepackage{tikz}
\usetikzlibrary{arrows.meta,positioning}
\newcommand{\ClaimCmd}{\mathsf{claim}}
\newcommand{\ReserveCmd}{\mathsf{reserve}}
\newcommand{\UseCmd}{\mathsf{use}}
\newcommand{\FreeState}{\mathsf{free}}
\newcommand{\UsedState}{\mathsf{used}}
\newcommand{\ReservedState}{\mathsf{reserved}}
\newcommand{\VAuth}{\mathsf{VAuth}}

\title{Brief Announcement: Fair Binding for Hidden-State Authorization in Byzantine SMR}

\titlerunning{Fair Binding for Hidden-State Authorization}

\author{Arnab Mallick}{Centre for Development of Advanced Computing, Hyderabad, India \and \url{https://arnab-m1.github.io/}}{arnabm@cdac.in}{https://orcid.org/0009-0008-5458-6429}{}

\author{Indraveni Chebolu}{Centre for Development of Advanced Computing, Hyderabad, India \and \url{}}{indravenik@cdac.in}{https://orcid.org/0009-0009-6128-6253}{}
{}

\authorrunning{Arnab Mallick}

\Copyright{Arnab Mallick}

\ccsdesc[500]{Computing methodologies~Distributed algorithms}
\ccsdesc[500]{Computer systems organization~Dependable and fault-tolerant systems and networks}
\ccsdesc[500]{Security and privacy~Distributed systems security}

\keywords{Byzantine consensus, validated agreement, order fairness, hidden state, authorization}

\category{Brief Announcement} 

\supplementdetails[linktext={GitHub repository},subcategory={Lean 4 formalization}]
{Software}
{https://github.com/Arnab-m1/Fair-Binding-for-Hidden-State-Authorization}

\nolinenumbers

\EventEditors{Ioannis Chatzigiannakis, Andrea Vitaletti, Keren Censor-Hillel, and William K. Moses Jr.}
\EventNoEds{4}
\EventLongTitle{40th International Symposium on Distributed Computing (DISC 2026)}
\EventShortTitle{DISC 2026}
\EventAcronym{DISC}
\EventYear{2026}
\EventDate{November 9--13, 2026}
\EventLocation{Rome, Italy}
\EventLogo{}
\SeriesVolume{397}
\ArticleNo{58}

\begin{document}

\maketitle

\begin{abstract}
Validated Byzantine SMR assumes that replicas can evaluate the validity of an ordered command. Agent authorization creates a different regime: a command may be valid only relative to a committed policy state that validators cannot reconstruct from the log. A proof that an action was authorized at an old commitment is then only a historical attestation, it does not by itself reserve the hidden resource for later use.

We isolate two independent requirements for safe live allocation of a hidden consumable resource under a Byzantine leader. First, arrival order at correct replicas must constrain commit order, the gap addressed by fair-ordering protocols. Second, a committed first request must bind later validity: it must make conflicting later requests invalid, not merely record that the first request was once authorized. The second requirement is non-vacuous precisely because the current policy state is hidden and not prefix-recoverable. Using an explicit authorization-witness interface, we characterize the two distinct obligations in this one-shot reservation model and give a fair reserve/use protocol satisfying both authorization safety and first-arrival liveness. Under trusted FIFO admission the two requirements collapse because admission and execution are atomic, Byzantine SMR separates request commitment from use.
\end{abstract}

\section{Introduction}

Byzantine state machine replication (SMR)~\cite{lamport1982byzantine,castro1999pbft} makes correct replicas agree on an ordered log. Validated Byzantine SMR~\cite{cachin2001broadcast} further requires each ordered value to satisfy a validator-checkable external-validity predicate. Standard applications make the predicate public or derivable from replicated state. Agent systems create a less classical case: authorization can depend on a policy state tracking budgets, permissions, retrieved context, or private balances that is committed but not replicated at validators. Validators see authenticated commitments to policy state and proof-carrying evidence, but cannot reconstruct the current state from the log prefix.

This distinction matters for contended consumable resources. Suppose a hidden policy state contains one available unit of a resource. Two clients may both receive valid evidence that their action was authorized at the same commitment $c_i$. A Byzantine leader can commit the second-arriving reservation before the first. Fair ordering addresses that arrival-order gap. But even if the first client's \emph{claim} is fairly ordered first, a claim that merely attests ``the resource was free at $c_i$'' does not reserve the resource: a Byzantine leader can order a conflicting reservation and use between the claim and the first client's later use. Figure~\ref{fig:separation} summarizes the two failure modes and their composition.

\paragraph*{One-shot example.}
A confidential policy permits one remaining API invocation. Agents $a$ and $b$ obtain valid witnesses against the same commitment while the unit is available. If $a$'s historical claim commits without creating a public reservation, validators still have no prefix-visible reason to reject $b$'s later reservation. Committing $\ReservedState(r,a)$, in contrast, makes $b$'s conflicting reservation invalid.

\begin{figure}[t]
  \centering
  \begin{tikzpicture}[
      entry/.style={draw,rounded corners=1pt,minimum height=4.5mm,
        inner xsep=3pt,font=\scriptsize},
      rejected/.style={entry,dashed,fill=black!6},
      flow/.style={-{Latex[length=1.5mm]},thin},
      heading/.style={font=\scriptsize,anchor=west},
      loglabel/.style={font=\scriptsize\itshape,anchor=east}
    ]
    \node[heading] at (0,0)
      {\textbf{(a) Binding without fair order}\quad arrival:
       $\ReserveCmd(a)\prec\ReserveCmd(b)$};
    \node[loglabel] at (1.05,-0.48) {log:};
    \node[entry] (a-b) at (2.1,-0.48) {$\ReserveCmd(b)$};
    \node[rejected,right=4mm of a-b] (a-a) {$\ReserveCmd(a)$ rejected};
    \draw[flow] (a-b) -- (a-a);
    \node[font=\scriptsize,anchor=west,right=4mm of a-a] {$b$ binds first};

    \node[heading] at (0,-1.15)
      {\textbf{(b) Fair claim order without binding}\quad arrival:
       $\ClaimCmd(a)\prec\ClaimCmd(b)$};
    \node[loglabel] at (1.05,-1.63) {log:};
    \node[entry] (b-ca) at (2.0,-1.63) {$\ClaimCmd(a)$};
    \node[entry,right=2.5mm of b-ca] (b-cb) {$\ClaimCmd(b)$};
    \node[entry,right=2.5mm of b-cb] (b-rb) {$\ReserveCmd(b)$};
    \node[entry,right=2.5mm of b-rb] (b-ub) {$\UseCmd(b)$};
    \draw[flow] (b-ca) -- (b-cb);
    \draw[flow] (b-cb) -- (b-rb);
    \draw[flow] (b-rb) -- (b-ub);
    \node[font=\scriptsize,anchor=west] at (1.28,-2.05)
      {historical evidence only; later reservations remain unconstrained};

    \node[heading] at (0,-2.72)
      {\textbf{(c) Fair reservation order with binding}\quad arrival:
       $\ReserveCmd(a)\prec\ReserveCmd(b)$};
    \node[loglabel] at (1.05,-3.20) {log:};
    \node[entry] (c-ra) at (2.1,-3.20) {$\ReserveCmd(a)$};
    \node[rejected,right=3mm of c-ra] (c-rb) {$\ReserveCmd(b)$ rejected};
    \node[entry,right=3mm of c-rb] (c-ua) {$\UseCmd(a)$};
    \draw[flow] (c-ra) -- (c-rb);
    \draw[flow] (c-rb) -- (c-ua);
    \node[font=\scriptsize,anchor=west] at (1.28,-3.62)
      {$\ReservedState(r,a)$ is prefix-visible};
  \end{tikzpicture}
  \caption{\textbf{Two distinct obligations and their composition.}
  Fair ordering protects a separated first arrival at the priority-command
  phase, binding determines what the winning committed command makes invalid
  afterward. Fairly ordered historical claims do not constrain later
  reservations, whereas a fairly ordered binding reservation does.}
  \label{fig:separation}
\end{figure}

\paragraph*{Contribution.}
We formalize this as \emph{hidden-state authorization}: validators can verify witnesses against commitments but cannot compute current authorization from the committed prefix. For a first-arriving honest request to safely and eventually execute, Byzantine SMR needs two orthogonal properties:
\begin{itemize}
  \item \textbf{Fair commit order.} If all correct replicas receive request $a$ before any receive conflicting request $b$, then $a$ must commit before $b$.
  \item \textbf{Binding validity.} Once $a$ commits, the validator must reject later conflicting requests. A historical attestation is not enough, the log must contain a reservation or an equivalent prefix-readable validity rule.
\end{itemize}

We make the witness interface explicit. We show that reservation priority requires fair commit order and that fair ordering of nonbinding historical claims alone is insufficient. We then give a two-phase reserve/use construction. The characterization is for one indivisible, non-replenishing resource, replenishing and multi-resource conflicts are outside its scope.

\section{Model}
\label{sec:model}

\paragraph*{Validated Byzantine SMR.}
$n=3f+1$ replicas, at most $f$ of which may be Byzantine and deviate arbitrarily, maintain an ordered log. The protocol is leader-based and partially synchronous: after an unknown GST, message delays are bounded and ordinary BFT liveness applies, timeouts trigger leader rotation. A \emph{slot} is a position in the committed log. Before ordering a command, correct replicas evaluate its external-validity predicate.

\begin{definition}[One-resource hidden-state authorization object]
  \label{def:object}
  Let $\mathcal S,\mathcal C,\mathcal R,\mathcal A,\mathcal E$ be the domains of policy states, commitments, resources, actions, and witnesses. The policy predicate and public witness verifier have signatures
  \[
    P:\mathcal S\!\times\!\mathcal R\!\times\!\mathcal A\to\{0,1\},
    \qquad
    \VAuth:\mathcal C\!\times\!\mathcal R\!\times\!\mathcal A\!\times\!\mathcal E\to\{0,1\}.
  \]
  The object exports $\ReserveCmd(r,a,e)\to\{\mathsf{commit},\mathsf{reject}\}$ and $\UseCmd(r,a)\to\{\mathsf{execute},\mathsf{reject}\}$. We use the auxiliary historical command $\ClaimCmd(c,r,a,e)\to\{\mathsf{commit},\mathsf{reject}\}$ to separate attestation from reservation. A matching reservation and use carry the same pair $(r,a)$.
\end{definition}

\paragraph*{Notation and timing.}
Let $L_{<j}$ be the committed prefix before slot $j$. Validation for slot $j$ uses $L_{<j}$ and the current policy commitment $c_j$. Validators maintain a set $K_j$ of authenticated commitments known from the prefix or authenticated policy-state interface. A prefix contains a \emph{matching reservation} for $a$ if it contains $\ReservedState(r,a)$, and a \emph{conflicting reservation} if it contains $\ReservedState(r,x)$ for some $x\#_r a$. A witness $e$ is \emph{current-valid} for $(r,a)$ at slot $j$ when $\VAuth(c_j,r,a,e)=1$.

\paragraph*{Hidden policy state.}
At log height $j$, the policy state is $S_j$ and its authenticated commitment is $c_j$. Validators know $c_j$ but do not know $S_j$ and cannot reconstruct it from $L_{<j}$. All correct validators evaluate authorization against the same authenticated current commitment: the policy-state interface does not equivocate among them. If validators could compute $S_j$ from the prefix, ordinary revalidation would subsume binding.

\paragraph*{Authorization witnesses.}
Soundness says that if $c$ binds state $S$ and $\VAuth(c,r,a,e)=1$, then $P(S,r,a)=1$; completeness gives an accepting witness for every authorized action. An opening, zero-knowledge proof, or transition certificate is an example realization of $e$, not a model primitive. Evidence is commitment-relative: evidence for $c_i$ proves authorization at $c_i$, not continued validity after later transitions.

\paragraph*{Hidden consumable resource.}
The theorem concerns one indivisible, non-replenishing resource $r$ whose hidden projection is in $\{\FreeState,\UsedState\}$. A consuming action is enabled only at $\FreeState$ and changes that projection to $\UsedState$; no transition restores it. Write $a\#_r b$ when both compete for that unit, so no legal execution from the same available-resource state can execute both. This is the one-shot incompatibility specialization of state-dependent conflict and non-commutativity~\cite{kuznetsov2025conflicts}. Replenishing balances and multi-resource transactions are outside the characterization.

\paragraph*{Command classes.}
A historical command $\ClaimCmd(c,r,a,e)$ is valid at slot $j$ when $c\in K_j$ and $\VAuth(c,r,a,e)=1$; committing it adds no public conflict fact and does not advance the hidden commitment. A reservation $\ReserveCmd(r,a,e)$ is valid when $e$ is current-valid and $L_{<j}$ contains no conflicting reservation, on commit it appends $\ReservedState(r,a)$. A use $\UseCmd(r,a)$ is valid only after a matching reservation, successful use consumes the hidden unit. Reservations are irrevocable: there is no release, cancellation, or expiry.

For a protocol, its \emph{priority command} is the command whose ordering establishes first-arrival precedence. It is $\ReserveCmd$ in the construction and $\ClaimCmd$ in the claim-priority baseline, that baseline still requires a later reservation before use.

\begin{definition}[Fair commit order and binding validity]
  \label{def:fairbinding}
  \emph{Fair commit order} requires that if every correct replica receives the priority command for $a$ before any correct replica receives a conflicting priority command for $b$, then $b$ does not commit before $a$. \emph{Binding validity} requires that once the winning priority command for $a$ commits, the prefix contains a public conflict fact that makes every later conflicting reservation invalid. In the reserve/use construction this fact is $\ReservedState(r,a)$.
\end{definition}

\begin{definition}[Evidence-local validation]
  \label{def:evloc}
  A validator is \emph{evidence-local} if its decision at slot $j$ may use $L_{<j}$, $K_j$, the current commitment $c_j$, public facts $\ReservedState(\cdot,\cdot)$ recorded in the prefix, syntactic well-formedness, and $\VAuth$ on supplied witnesses. It does not query or reconstruct $S_j$. Thus, unless the prefix contains a binding fact, authorization evidence does not make conflicting later commands invalid.
\end{definition}

\begin{definition}[Authorization safety and first-arrival liveness]
  \label{def:properties}
  \emph{Authorization safety} requires that no two actions $a\#_r b$ both execute and every executed $\UseCmd(r,a)$ has a prior matching reservation. \emph{First-arrival liveness} requires that if the honest priority command for $a$ is received by every correct replica before any correct replica receives a conflicting priority command, then $a$ eventually executes under the usual post-GST conditions, assuming the honest client remains live, supplies current-valid evidence until its reservation is ordered (refreshing it if necessary), and submits the required follow-up reservation and use. For interleaved conflicting arrivals no first-arrival guarantee is imposed, safety still permits at most one conflicting use.
\end{definition}

\begin{proposition}[Trusted-FIFO baseline]
  \label{prop:fifo}
  Under trusted FIFO admission, the first request accepted by the trusted path is ordered and executed atomically before any later conflict. Fair ordering and consensus-resident binding are therefore not separate obligations. Byzantine-leader SMR separates request commitment from later use, exposing both gaps.
\end{proposition}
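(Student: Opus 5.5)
The plan is to make the trusted-FIFO model precise and then derive both parts of the statement from it. In that model a single trusted admission path receives requests in a total order. It evaluates each request against the live hidden state $S$, which it can read. Admission and execution are one step: when it accepts $\ReserveCmd(r,a,e)$ (or an immediate use), it checks $P(S,r,a)=1$, applies the consuming transition $\FreeState\to\UsedState$, and only then looks at the next request. The argument is an induction on the admission sequence.

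\textbf{Step 1 (atomic first-winner).} The invariant is that after the $k$-th admitted request, the projection of $r$ is $\UsedState$ exactly when some earlier request on $r$ executed, and at most one such request has executed. The base case holds because $r$ starts $\FreeState$. For the inductive step, the first request $a$ that passes $P$ executes at once, and its transition makes $r$ $\UsedState$. Every later $b$ with $a\#_r b$ then sees $r=\UsedState$, so $P(S,r,b)=0$ and $b$ is rejected, because the resource is non-replenishing. This gives authorization safety in the sense of Definition~\ref{def:properties}. The matching-reservation requirement holds trivially, since a reservation and its use coincide in a single atomic step.

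\textbf{Step 2 (collapse of the two obligations).} Fair commit order in Definition~\ref{def:fairbinding} holds by construction, because the admission order is the arrival order at the trusted path and there is no Byzantine leader able to reorder requests. Binding validity holds because the winner's effect is recorded directly in the state that later validation reads, with no delay between admission and execution. No separate public conflict fact $\ReservedState(r,a)$ is needed: the trusted path is not evidence-local in the sense of Definition~\ref{def:evloc}, since it reads $S$ itself. Both requirements are therefore implied by a single property, namely that admission and execution are atomic. Neither is an independent obligation.

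\textbf{Step 3 (separation under Byzantine SMR).} This step is mostly expository: it points to the two failure modes of Figure~\ref{fig:separation}. With a Byzantine leader, the commit order of priority commands can differ from their arrival order, as in scenario (a), so fairness has to be imposed separately. Validators are evidence-local and cannot see $S_j$. A committed historical claim is therefore not a binding fact, and the execution in (b) passes validation. Both ingredients are thus genuinely needed there.

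The main obstacle is that the statement is informal. The real work lies in fixing what ``trusted FIFO admission'' means so that atomicity is an assumption and not something to prove. I would state explicitly that the trusted path has live read/write access to $S$ and processes one request at a time; the rest is a short induction. For the separation claim I would present the Figure~\ref{fig:separation} executions as concrete witnesses, not a formal lower bound. That lower bound is left to the later results of the paper.
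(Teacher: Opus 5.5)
Your proposal is correct and follows the paper's own reasoning. The paper gives no separate proof. It justifies the proposition by the atomicity of admission and execution at the trusted path, as in its Discussion paragraph on locking, and the Byzantine-leader separation is exactly the two failure modes of Figure~\ref{fig:separation}, made precise in Lemmas~\ref{lem:fairnecessary} and~\ref{lem:bindingnecessary}.

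One refinement: letting the trusted path read $S$ directly is stronger than needed, and it blurs the collapse with the revalidation regime the paper explicitly excludes. Atomicity alone suffices, even for an evidence-local path. Execution advances the commitment before the next admission, so by soundness of $\VAuth$, with consumption enabled only at $\FreeState$, no witness for a conflicting $b$ can verify against the post-consumption commitment. This is the same mechanism the paper's proof of Lemma~\ref{lem:bindingnecessary} rules out by assuming the commitment stays at $c_i$.
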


\section{Necessity}

We separate two requirements. Fair ordering is about \emph{which} request reaches the log first, binding validity is about \emph{what} the first log entry makes invalid later.

\begin{lemma}[Fair order is necessary for first-arrival liveness]
  \label{lem:fairnecessary}
  Consider an authorization-safe protocol using the reservation rule above. If it permits a later conflicting reservation to commit before a first-arriving honest reservation, then it can violate first-arrival liveness.
\end{lemma}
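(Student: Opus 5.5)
The argument is a direct adversarial execution: since the statement is existential (``can violate''), one bad run suffices. Fix a protocol that is authorization-safe, uses the reservation rule of the Model section, and admits an execution in which a later-arriving conflicting reservation commits before a first-arriving honest one. Take two honest-looking agents $a$ and $b$ with $a\#_r b$, both holding witnesses $e_a,e_b$ that are current-valid against the same commitment $c_i$ while $r$ is $\FreeState$; completeness of $\VAuth$ guarantees such witnesses exist. Schedule the network so that every correct replica receives $\ReserveCmd(r,a,e_a)$ before any correct replica receives $\ReserveCmd(r,b,e_b)$. This puts us squarely in the hypothesis of first-arrival liveness (Definition~\ref{def:properties}).

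Next, let the Byzantine leader order $\ReserveCmd(r,b,e_b)$ into slot $j$ ahead of $a$'s reservation. The hypothesis of the lemma is exactly that the protocol permits this commit. The validity check passes: $e_b$ is current-valid at $c_j=c_i$ because no hidden transition has occurred yet, and $L_{<j}$ contains no conflicting reservation. Committing it appends $\ReservedState(r,b)$.

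From then on, every attempt to order $a$'s reservation at any slot $j'>j$ is rejected by every correct validator. $L_{<j'}$ contains the conflicting reservation $\ReservedState(r,b)$, and reservations are irrevocable, with no release, cancellation, or expiry. The same holds even if $a$ refreshes its evidence, so the liveness assumption that $a$ keeps supplying current-valid evidence does not help. Since $\UseCmd(r,a)$ is valid only after a matching reservation, $a$ never executes. Going further, $b$ may issue $\UseCmd(r,b)$, which consumes the unit; authorization safety then forbids any execution of $a$ regardless of the validator rule. Leader rotation after GST does not rescue $a$, because the offending commit is already final in the log. So first-arrival liveness fails in this execution.

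The main obstacle is being precise that the adversary stays within the model. The run must be consistent with partial synchrony, with at most $f$ faults, and with the non-equivocating commitment interface. I would handle this by letting the fault be just the leader's ordering choice and letting the bad slot occur after GST, which shows that liveness assumptions cannot repair it. I would also note that evidence-local validation (Definition~\ref{def:evloc}) means no validator can distinguish ``$b$ arrived later'' from the prefix. Rejecting $b$ would therefore require an ordering-layer guarantee, namely fair commit order.
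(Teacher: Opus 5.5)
Your proposal is correct and matches the paper's argument. A Byzantine leader commits the later $\ReserveCmd(r,b,e_b)$ ahead of $a$'s first-arriving reservation, so the prefix then contains the conflicting fact $\ReservedState(r,b)$, and the reservation rule together with authorization safety prevents $a$ from ever executing. Your extra details (witnesses from completeness, refreshed evidence not helping, irrevocability, leader rotation not undoing a committed slot) only spell out what the paper's short proof leaves implicit; placing the bad slot after GST is harmless but unnecessary, since the commit is permanent either way.
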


\begin{proof}
  Let honest $\ReserveCmd(r,a,e_a)$ be received by all correct replicas before any correct replica receives $\ReserveCmd(r,b,e_b)$ with $a\#_r b$, and suppose both witnesses are current-valid at the relevant slot. By the negative assumption, some Byzantine-leader execution commits $b$ first. The prefix now contains $\ReservedState(r,b)$, a conflicting reservation for $a$. To preserve authorization safety, the protocol must reject or fail to execute $a$; accepting both would allow conflicting uses of $r$. Hence the first-arriving honest request is not live. First-arrival liveness therefore requires fair commit order for conflicting reservations.
\end{proof}

This is exactly the gap targeted by order-fairness mechanisms~\cite{kelkar2020order,kelkar2023themis,zhang2020pompe}.

\begin{lemma}[Historical attestations are insufficient]
  \label{lem:bindingnecessary}
  Suppose first-arrival precedence is enforced only by fairly ordering historical $\ClaimCmd$ commands, while committed claims neither create a public conflict fact nor constrain the order of later conflicting reservations. Then fair claim order alone does not imply authorization safety plus first-arrival liveness.
\end{lemma}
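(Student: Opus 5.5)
The plan is to prove the lemma by exhibiting a single Byzantine-leader execution, mirroring the proof of Lemma~\ref{lem:fairnecessary}, in which claim order is perfectly fair and yet the first-arriving honest client $a$ does not execute. An authorization-safe protocol can only avoid that outcome by ruling out conflicting uses, which is exactly what forces the loss of liveness. The premises fix the adversary's freedom. Claims are ordered fairly, but a committed $\ClaimCmd$ adds no public conflict fact. The order of subsequent $\ReserveCmd$ commands is not tied to claim order. I would also invoke evidence-local validation (Definition~\ref{def:evloc}): without a binding fact in the prefix, $a$'s committed claim cannot make $b$'s later reservation invalid.

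The construction proceeds as follows. Take $a\#_r b$, with $r$ free at commitment $c_i$, and give both clients witnesses $e_a,e_b$ with $\VAuth(c_i,r,\cdot,\cdot)=1$; completeness makes these available. Let every correct replica receive $\ClaimCmd(c_i,r,a,e_a)$ before any receives $\ClaimCmd(c_i,r,b,e_b)$. Fair commit order is then respected by committing $a$'s claim first and $b$'s second, which is the log of Figure~\ref{fig:separation}(b).

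The Byzantine leader then orders $\ReserveCmd(r,b,e_b)$ before $a$'s reservation. I have to check that this reservation is valid at its slot $j$. The witness $e_b$ is current-valid because no hidden transition has occurred since $c_i$: claims do not advance the commitment, so $c_j=c_i$. The prefix holds only claims, hence no $\ReservedState$ fact conflicts with $b$. An evidence-local validator must therefore accept it. It then appends $\ReservedState(r,b)$, and $\UseCmd(r,b)$ commits and consumes $r$.

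From here the argument splits into two cases. If the protocol later accepts $a$'s reservation and use, then both $a$ and $b$ execute, which violates authorization safety. If it rejects them, which it must since $\ReservedState(r,b)$ is now a conflicting reservation and $r$ is used, then $a$ never executes although its priority command arrived first everywhere. This violates first-arrival liveness, and it does so even though $a$ stays live and refreshes its evidence, because after $b$'s use no current-valid witness for $a$ exists. In either case fair claim order fails to give both properties.

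The step I expect to be the main obstacle is showing that the leader can legitimately place $b$'s reservation ahead of $a$'s. If $a$ submits its reservation promptly, one might worry that some ordering constraint protects it. The premise, however, states that claims do not constrain the order of later reservations. So the adversary needs only to schedule $b$'s reservation arrival first, or to use its leader discretion within the slack that the hypothesis leaves open. I would state this explicitly as the use of that hypothesis.
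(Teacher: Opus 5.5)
Your proposal is correct and follows the paper's proof essentially step for step. Fair claim order adds no public conflict fact and leaves the commitment at $c_i$, so $\ReserveCmd(r,b,e_b)$ stays valid; the Byzantine leader can commit it first and force a choice between violating safety and violating first-arrival liveness. Two details go slightly beyond the paper and are sound: you place $b$'s reservation before $a$'s reservation rather than merely before $a$'s use, and you add a $\UseCmd(r,b)$ step that rules out any refreshed witness for $a$.
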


\begin{proof}
  Suppose fair ordering guarantees that an attestation $\ClaimCmd(c_i,r,a,e_a)$ commits before a conflicting attestation $\ClaimCmd(c_i,r,b,e_b)$, where $a\#_r b$. By definition, an attestation-only claim records $\VAuth(c_i,r,a,e_a)=1$ but appends no $\ReservedState(r,a)$, no other conflict fact, and no update that makes the hidden commitment unavailable for later conflicting authorization. Consider an execution in which no authenticated policy-commitment update occurs between these slots, so the current commitment remains $c_i$. Immediately after the committed claim for $a$, the public prefix still contains no fact making $b$ invalid, and $e_b$ can be current-valid for $b$. A Byzantine leader can therefore propose $\ReserveCmd(r,b,e_b)$ before $a$'s later use. If $b$ is accepted, $a$ can no longer execute without double-spending $r$; if $a$ is later also accepted, safety is violated. Rejecting $a$ preserves safety but violates first-arrival liveness. Fair ordering of attestations therefore does not suffice, the first committed claim must bind subsequent validity.
\end{proof}

\begin{observation}[Orthogonality]
  Fair order and binding validity are independent. A binding reservation without fair order can bind the wrong, later-arriving request first. A fair-ordered attestation without binding leaves later conflicting validity unchanged. The two obligations therefore address distinct failure modes in the one-shot reservation model.
\end{observation}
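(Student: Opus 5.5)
The Observation claims that each obligation can hold while the other fails, and that each such failure breaks the target properties. I plan to prove it as two separation witnesses, one per direction, with a short argument that the two failure modes are distinct. Both witnesses will be concrete executions in the one-shot model of Section~\ref{sec:model}: one resource $r$, two actions $a\#_r b$, and a Byzantine leader. In both, the hidden commitment stays fixed at $c_i$ across the relevant slots, so both witnesses stay current-valid.

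\textbf{Binding without fairness.} I take the reserve/use rule with a leader that orders $\ReserveCmd$ commands arbitrarily. Binding validity holds by construction: committing a reservation appends $\ReservedState(r,\cdot)$, and every later conflicting reservation is then invalid. Now let all correct replicas receive $\ReserveCmd(r,a,e_a)$ before any of them receives $\ReserveCmd(r,b,e_b)$. The leader commits $b$ first. Lemma~\ref{lem:fairnecessary} applies directly: authorization safety forces $a$ to be rejected, so first-arrival liveness fails. So binding does not imply fair order, and binding alone does not give the target pair of properties. This is scenario (a) of Figure~\ref{fig:separation}.

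\textbf{Fairness without binding.} I take the claim-priority baseline with a fair-ordering layer on $\ClaimCmd$. Fair commit order holds by assumption, since $\ClaimCmd(a)$ commits before $\ClaimCmd(b)$. Binding validity fails: by the command-class definition, a committed claim adds no public conflict fact. Because the validator is evidence-local (Definition~\ref{def:evloc}), the later reservation for $b$ stays valid. Lemma~\ref{lem:bindingnecessary} then gives the violation. So fair order does not imply binding, and fair order alone does not suffice. This is scenario (b) of Figure~\ref{fig:separation}.

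\textbf{Distinct failure modes.} Finally I check that the two failures differ in kind. In (a) the wrong command is the first entry in the log. In (b) the right command is first, but later validity is unchanged. Hence neither property can be derived from the other.

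The main obstacle is making \emph{independent} precise. The Observation does not define it formally, so I will read it as: each property is realizable by a protocol that violates the other, and in that protocol the target properties fail. A second point needs care. In the claim baseline, fairness is about claims, not reservations, so "fair order holds" must be read with respect to the priority command, as in Definition~\ref{def:fairbinding}. I will state this explicitly so that the two witnesses satisfy exactly one property each.
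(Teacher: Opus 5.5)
Your proposal is correct and matches the paper. The paper gives no standalone proof of the Observation; it rests it on the same two separations you use: scenario (a) of Figure~\ref{fig:separation} via Lemma~\ref{lem:fairnecessary}, and scenario (b) via Lemma~\ref{lem:bindingnecessary}. When you invoke Lemma~\ref{lem:bindingnecessary}, say explicitly that your claim-priority baseline imposes no claim-derived ordering constraint, or equivalent prefix-readable rule, on later conflicting reservations. That condition is part of the lemma's hypothesis, and it is what lets the Byzantine leader order $\ReserveCmd(r,b,e_b)$ first.
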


\section{Sufficiency Construction}

\begin{proposition}[Fair reserve/use]
  \label{prop:reserveuse}
  Assume a reservation-ordering layer satisfying fair commit order (Definition~\ref{def:fairbinding}). The reserve/use protocol satisfies authorization safety and first-arrival liveness for the one-shot hidden resource.
\end{proposition}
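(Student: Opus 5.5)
The plan is to prove the two properties separately, each by reduction to the validity rules for the command classes and to the fair commit order hypothesis of Definition~\ref{def:fairbinding}. For \emph{authorization safety}, I will establish an invariant on committed prefixes. The invariant is that, for the resource $r$, the prefix contains at most one entry $\ReservedState(r,x)$ among any set of pairwise conflicting actions. I prove it by induction on the slot $j$. A $\ReserveCmd(r,a,e)$ commits at slot $j$ only if correct replicas found that $L_{<j}$ contains no conflicting reservation. All correct replicas evaluate the same committed prefix, and validation requires a quorum containing correct replicas. So once $\ReservedState(r,x)$ is in the prefix, every later $\ReserveCmd(r,a,\cdot)$ with $x\#_r a$ is invalid. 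Reservations are irrevocable, so the fact persists. Since $\UseCmd(r,a)$ is valid only after a matching reservation, every executed use has a prior matching reservation. Two executed uses of $a\#_r b$ would require both $\ReservedState(r,a)$ and $\ReservedState(r,b)$ in the log, which contradicts the invariant. The non-replenishing projection gives one further guarantee: even a duplicate use by the same $a$ can consume the unit at most once, because the hidden state is $\UsedState$ afterwards.

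For \emph{first-arrival liveness}, I consider the honest $\ReserveCmd(r,a,e_a)$ received by every correct replica before any conflicting priority command. The argument has three steps.

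\textbf{Step 1 (order).} By fair commit order, no conflicting reservation $\ReserveCmd(r,b,\cdot)$ commits before $a$'s. Hence, at every slot up to $a$'s commit, the prefix has no conflicting reservation.

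\textbf{Step 2 (validity and eventual commit).} The client supplies current-valid evidence, refreshing it as the commitment advances. This is possible by witness completeness, because the hidden projection of $r$ stays $\FreeState$. No use can consume $r$ without a prior reservation, and no reservation for $r$ other than a conflicting one (excluded by Step 1) or $a$'s own exists. So $a$'s reservation passes validation at every slot at which it is proposed. Post-GST, ordinary BFT liveness with leader rotation then ensures it is eventually ordered. Once it commits, $\ReservedState(r,a)$ is prefix-visible, and binding makes all later conflicting reservations invalid.

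\textbf{Step 3 (use).} The client then submits $\UseCmd(r,a)$. It is valid because a matching reservation exists. The unit is still $\FreeState$, because only a matching reservation's use can consume it and only $a$ holds one, so it executes.

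I expect Step 2 to be the main obstacle. The difficulty is arguing that a Byzantine leader cannot stall or invalidate $a$'s reservation indefinitely. Fair commit order only constrains the relative order of conflicting commands. It does not by itself force inclusion, and the hidden commitment may advance, making $e_a$ stale. I would handle this in two parts. First, I lean on the assumed liveness of the ordering layer after GST, where leader rotation eventually yields a correct leader who proposes $a$'s pending request. Second, I use the client-refresh assumption in Definition~\ref{def:properties}, together with the fact that the resource projection cannot leave $\FreeState$ without a reservation. Together these show that fresh evidence always exists and is accepted. The case of interleaved arrivals needs no liveness argument, since only safety is claimed there.
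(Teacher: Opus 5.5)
Your proof is correct and follows the paper's own argument. Safety goes by induction on slots, using the prefix conflict test and the rule that every use needs a matching reservation; liveness comes from fair commit order, the client's evidence refresh, and post-GST BFT liveness, which commits the reservation and then the use, and you spell out the eventual-inclusion step that the paper leaves implicit. One small point: your appeal to witness completeness in Step~2 is unnecessary and not fully justified, because $P$ may depend on hidden state beyond $r$'s $\FreeState$/$\UsedState$ projection, but the client-refresh assumption in Definition~\ref{def:properties} already guarantees current-valid evidence, so the argument does not depend on it.
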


\begin{proof}[Construction]
  \emph{Reserve.} At slot $j$, a client submits a reservation for $(r,a)$ with evidence $e$. Validity requires
  \[
    \VAuth(c_j,r,a,e)=1
  \]
  and no conflicting reservation in $L_{<j}$. On commit, the log records $\ReservedState(r,a)$ as a public binding fact.

  \emph{Use.} A client submits $\UseCmd(r,a)$. Validity requires a prior $\ReservedState(r,a)$ in the prefix. No conflicting $\UseCmd(r,b)$ is valid because the prefix test prevented a conflicting reservation and use requires a matching one.

  \emph{Safety.} By induction on slots, once $\ReservedState(r,a)$ appears, every later $\ReserveCmd(r,b,\cdot)$ with $a\#_r b$ is invalid, and every use names a committed reservation. Hence conflicting uses cannot both execute. \emph{Liveness.} If honest $\ReserveCmd(r,a,\cdot)$ is received by all correct replicas before any conflicting reservation, fair ordering puts it first while the client's supplied or refreshed evidence remains current-valid. The reservation binds the prefix, so later conflicts are invalid. After GST, ordinary BFT liveness orders the matching use.
\end{proof}

\paragraph*{Algorithmic interface.}
The construction treats fair ordering for reservation commands as a service supplied by the consensus layer. Binding is implemented by the application's external-validity predicate: once $\ReservedState(r,a)$ appears in the committed prefix, validators reject every conflicting reservation. The two obligations therefore compose at different layers-ordering in consensus and conflict exclusion in application validity.

\begin{theorem}[One-resource characterization]
  \label{thm:characterization}
  For evidence-local Byzantine SMR implementing Definition~\ref{def:object} for one indivisible, non-replenishing resource under the reservation-admissibility rule above: 
  \begin{itemize}
      \item an authorization-safe implementation satisfying first-arrival liveness for reservation-priority requests requires fair commit order for conflicting reservations;
      \item in the claim-priority baseline, fair ordering of historical claims without a prefix-visible binding fact or another ordering constraint on later conflicting reservations is insufficient for authorization safety plus first-arrival liveness;
      \item the fair reserve/use protocol satisfies authorization safety and first-arrival liveness.
  \end{itemize} 
\end{theorem}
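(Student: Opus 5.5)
The theorem is an assembly of the three results already established, so I will prove it one bullet at a time. For each bullet I will check that the hypotheses of the cited result hold in the setting fixed by the theorem: an evidence-local validator (Definition~\ref{def:evloc}), one indivisible non-replenishing resource $r$, and the reservation-admissibility rule from the command classes. Evidence-locality matters throughout. It guarantees that the only way a committed command can make a later one invalid is through a public fact recorded in the prefix, and not through the hidden state $S_j$.

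For the first bullet I will invoke Lemma~\ref{lem:fairnecessary} in contrapositive form. Suppose an authorization-safe implementation satisfies first-arrival liveness for reservation-priority requests but lacks fair commit order. By Definition~\ref{def:fairbinding}, some Byzantine-leader execution then commits a conflicting $\ReserveCmd(r,b,e_b)$ before the first-arriving honest $\ReserveCmd(r,a,e_a)$. Both witnesses can be current-valid at that point, because no consuming transition has yet occurred. The committed $\ReservedState(r,b)$ is a conflicting reservation for $a$. Under the admissibility rule, $a$'s reservation is therefore invalid and $a$ never executes, which contradicts liveness. If instead the implementation admitted $a$ anyway, both uses could execute, which contradicts safety.

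For the second bullet I will invoke Lemma~\ref{lem:bindingnecessary}. The step to verify is that, in the claim-priority baseline under evidence-local validation, a committed $\ClaimCmd$ adds no $\ReservedState$ fact and does not advance $c_j$. Consequently $b$'s later $\ReserveCmd$ passes both admissibility checks: its witness is current-valid, and the prefix contains no conflicting reservation. The lemma's execution then leaves only two outcomes: either double use, or rejection of $a$.

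For the third bullet I will cite Proposition~\ref{prop:reserveuse} directly. For safety I will check the slot induction: once $\ReservedState(r,a)$ is in the prefix, no conflicting reservation is ever admitted, and every use requires a matching reservation, so at most one of any pair $a\#_r b$ executes. For liveness I will combine fair commit order with the client assumptions in Definition~\ref{def:properties}. I expect this to be the main obstacle, because liveness needs the honest reservation to be ordered at a slot where its evidence is still current-valid. Fair ordering only guarantees that no conflicting reservation precedes it; it does not guarantee that the evidence is still valid. I will handle this with two observations. First, the only way the resource leaves $\FreeState$ is through a use, and a use requires a prior reservation. Since no conflicting reservation can precede $a$'s, the hidden unit stays free until $a$ commits. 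Second, by completeness of $\VAuth$, a refreshed current-valid witness therefore always exists. Post-GST leader rotation then orders the reservation and the subsequent use.
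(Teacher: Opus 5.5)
Your proposal matches the paper's proof: the theorem is just the conjunction of Lemma~\ref{lem:fairnecessary}, Lemma~\ref{lem:bindingnecessary}, and Proposition~\ref{prop:reserveuse}, and your per-bullet checks retrace those arguments. One small note on the third bullet: the evidence-validity ``obstacle'' is already discharged by the hypothesis in Definition~\ref{def:properties} that the client supplies current-valid evidence until its reservation is ordered, which is how the paper handles it. So your completeness detour is unnecessary, and as stated it does not quite follow, because the unit being $\FreeState$ is only a necessary condition for $P(S_j,r,a)=1$, not a sufficient one.
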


\begin{proof}
  Clause~(i) is Lemma~\ref{lem:fairnecessary}, clause~(ii) is Lemma~\ref{lem:bindingnecessary}, and clause~(iii) is Proposition~\ref{prop:reserveuse}.
\end{proof}

\section{Discussion}

\paragraph*{Why this is not just revalidation.}
If validators can compute the current state from the log, they can revalidate each command against $S_j$, and binding validity is automatic. That is not our model. Here the policy state is committed but hidden, so old evidence is stale information and the prefix contains no conflict fact unless the protocol writes one. Binding is therefore a consensus-level obligation.

\paragraph*{Why this is not just locking.}
Under a trusted lock manager or trusted FIFO queue, admission and execution are atomic outside the Byzantine log. In Byzantine SMR, the log is the only shared durable medium. A lock that is not in the log can be ignored or reordered by the leader, a log entry that is not binding is only an attestation. The lock must be both consensus-visible and fairly ordered.

\paragraph*{Why this is not just fair ordering.}
Order-fairness constrains the relative order of commands, not the monotonicity of an application's validity predicate. Lemma~\ref{lem:bindingnecessary} gives the separation: even perfectly fair ordering of historical claims leaves later validity unchanged unless the first claim creates a public binding fact.

\paragraph*{Beyond one resource.}
We do not claim a free extension to replenishing balances, multiple units, multi-resource transactions, or partial fair orders. For example, if $T_1$ and $T_2$ spend Alice's balance while an intervening $T_3$ replenishes it, $T_1\to T_3\to T_2$ may be legal: the conflict is version-dependent and can disappear after $T_3$. Such an execution requires fresh authorization rather than permanent exclusion and lies outside our one-shot theorem. With $k$ units and $m>k$ contenders, one must additionally agree on a winner set, while partial fair orders may contain cycles.

\paragraph*{Relation to prior work.}
Fair-ordering protocols~\cite{kelkar2020order,kelkar2023themis,zhang2020pompe} repair the arrival-order-vs-commit-order gap but do not by themselves specify what a committed value means for later application validity. Validated Byzantine agreement~\cite{cachin2001broadcast} covers public predicates, our setting is different because current authorization is committed but not reconstructible by validators. The attestation/reservation distinction is the optimistic-vs-pessimistic concurrency-control boundary~\cite{kung1981optimistic} transported into Byzantine SMR with hidden state. Dynamic-concurrency work makes conflict explicitly state-dependent~\cite{kuznetsov2025conflicts}, our relation $\#_r$ is its non-replenishing, one-shot specialization.

\bibliography{lipics-v2021-sample-article}

\end{document}